\documentclass{ifacconf}

\usepackage{graphicx}      
\usepackage{natbib}        
\usepackage{amsmath,amsfonts}
\usepackage{amssymb}

\newtheorem{definition}{Definition}

\newtheorem{lemma}{Lemma}
\begin{document}
\begin{sloppypar}
\begin{frontmatter}

\title{End-to-End ILC for Repetitive Untrackable Tasks: A Cooperative Game Perspective\thanksref{footnoteinfo}} 

\thanks[footnoteinfo]{This work was partially supported by the National Natural Science Foundation of China under Grant 62361136585, partially by the 111 Project under Grant B23008, partially by the National Science Centre Poland under Grant 2023/48/Q/ST7/00205, and partially by the Fundamental Research Funds for the Central Universities under Grant JUSRP202601030.}

\author[JNU]{Zhihe Zhuang} 
\author[TUe]{Rodrigo A. Gonz\'alez}
\author[JNU]{Hongfeng Tao}
\author[UZG]{Wojciech Paszke}
\author[TUe]{Tom Oomen}

\address[JNU]{Key Laboratory of Advanced Process Control for Light Industry (Ministry of Education), Jiangnan University, Wuxi, China. (e-mails: z.h.zhuang@jiangnan.edu.cn;taohongfeng@jiangnan.edu.cn).}
\address[TUe]{Control Systems Technology Section, Department of Mechanical Engineering, Eindhoven University of Technology, Eindhoven, The Netherlands. (e-mails: r.a.gonzalez@tue.nl;t.a.e.oomen@tue.nl).}
\address[UZG]{Institute of Automation, Electronic and Electrical Engineering, University of Zielona G{\'o}ra, Zielona G{\'o}ra, Poland. (e-mail: w.paszke@iee.uz.zgora.pl)}

\begin{abstract}                
An inherent assumption of perfect tracking in iterative learning control (ILC) is that there exists an ILC input such that the generated output can track the desired trajectory reference. This assumption may fail in practice, which gives rise to desired but untrackable tasks. This paper gives an end-to-end ILC design for repetitive untrackable tasks in closed-loop systems. The reference input is trial-to-trial updated together with the ILC feedforward input based on the measurement data. This two-player behavior of the closed-loop ILC system is investigated from a cooperative game perspective. A sufficient condition for the two-player end-to-end ILC to have a lower cost than the one-player norm optimal ILC (NOILC) is discovered. Finally, a numerical example is given to verify the effectiveness of the developed method.
\end{abstract}

\begin{keyword}
Iterative learning control; untrackable task; end-to-end design; cooperative game.
\end{keyword}

\end{frontmatter}

\section{Introduction}
Iterative learning control (ILC) is a useful control strategy for finite-time repetitive tracking tasks where repetitive disturbances can be reduced with enough trials executed, as shown in the early work in \cite{Arimoto1984}. The appeal of ILC lies in its ability to achieve perfect tracking of repetitive tasks over a finite interval. However, the perfect tracking goal usually violates physical restrictions in practice, and thus the desired reference may be untrackable. In \cite{Meng2021TAC_TrackabilityRealizability}, the criteria to check whether a desired reference is trackable to an ILC system is first investigated. Then, the trackable property of ILC has been comprehensively established in \cite{Meng2023JAS_FundamentalTrackability,Wu2023AC_DataBasedTrackability_Meng}. 

To address the ILC problem without the trackability assumption, some research has been conducted to deal with the untrackable repetitive tasks. In \cite{Zhang2023JAS_DataDrivenApproximation_Untrackable}, an input-output-driven ILC design is developed to achieve the best approximation of the untrackable reference with respect to a predefined performance index. The best approximation reference is defined in a mean square sense. In \cite{Wang2024CDC_CompensationForTrackability}, an auxiliary system is learned from the offline input-output test data, which can be used for an ILC compensation strategy to enhance the tracking performance for untrackable tasks.

In practical applications, ILC is usually applied as a feedforward part based on the knowledge from previous trials. When the plant is stabilized first by a feedback controller, the ILC design, acting as a feedforward controller, is referred to as a parallel architecture, which is a common practice as discussed in \cite{Bristow2006survey}. This closed-loop architecture has been widely used in many practical applications, such as precision motion control in \cite{Barton2008TCST_CrossCoupledILC,Oomen2017AC_SparseILC,Zhou2024TII_WaferStageILC} and rehabilitation in \cite{Freeman2016Book_Rehabilitation}. In this case, an untrackable reference would result in the violation of hardware restrictions, which may activate the preset software protection, leading the failure of the current operation.

The solution developed in this paper is to adjust the reference input for the closed-loop systems. From a standard perspective, to fulfill the desired objective with good control performance, the trajectory planning/optimization problem should be carefully addressed first. A stationary trajectory is typically generated for certain restrictions before applying the feedback plus feedforward controllers. In \cite{Lambrechts2005CEP}, a fourth-order trajectory planning algorithm is developed for a feedforward controller with point-to-point moves for high performance. Even if the trajectory is well-designed by using known values of certain restrictions, such as plant inversion information from identification techniques, the inaccuracy of parameters or unknown factors may have a deteriorating influence on the trajectory design, which restricts the further performance improvements of feedforward control. In \cite{TDSon2013AC}, a trajectory learning function is introduced based on the idea of interpolating splines to improve the control performance for point-to-point tracking tasks. In \cite{Cobb2020TCST}, an trial-to-trial path adaptation law is developed to adjust the path shape, aiming to improve the defined performance.


In this paper, the concept of end-to-end, which is standard in machine learning, e.g., self-driving cars \citep{Bojarski2016_EndToEnd} and artificial intelligence training \citep{Silver2017Predictron_EndToEnd}, is introduced to address the untrackable task in the closed-loop ILC systems. In the end-to-end ILC, the trajectory will be trial-to-trial updated based on the measurement data, to both approach the pre-defined ideal trajectory and enhance the robustness against disturbances in practice. A modified norm optimal ILC (NOILC)-like cost function is designed, and explicit update laws for both trajectory and feedforward input updates are derived from the optimization perspective. Note that both reference and feedforward inputs are updated for each trial. The two-player behavior of the closed-loop ILC system is investigated from a cooperative game perspective. The performance improvement of the two-player end-to-end ILC compared to the one-player NOILC is discussed using cooperative game theory. A numerical case study is given to verify the effectiveness of the end-to-end ILC method.


This paper is organized as follows. The problem formulation is first given in Section \ref{section II}. The end-to-end ILC design is presented in Section \ref{section III}. Section \ref{section IV} gives the new result of analyzing the relationship between the reference and ILC input from the cooperative game perspective. A numerical case study is given in Section \ref{section V} to verify the result, and the conclusions are given in Section \ref{section VI}.

Throughout this paper, $\mathbb{N}$ denotes the set of natural number; $\mathbb{R}^{n}$ and $\mathbb{R}^{n \times m}$ denote the sets of $ n $-dimensional real vectors and $ n \times m $ real matrices, respectively. The superscript $ \top $ denotes the transpose operation. $ I_n $ represents the $ n $-dimensional identity matrix and $ \boldsymbol 0 $ denotes the zero vector with compatible dimensions. The notation $A \succeq/\succ 0$ represents that the matrix $A$ is (semi-)positive definite. $\Vert \cdot\Vert_{\boldsymbol Q}$ denotes the weighted Euclidean norm induced by $\boldsymbol Q$.

\section{Problem Formulation}\label{section II}

\subsection{System dynamics}

The closed-loop system in Fig. \ref{fig:ControlDiagram} is considered to execute a class of repetitive tasks, where an ILC implementation combining both parallel and serial architectures is given. Please refer to \cite{Bolder2016AC_InferentialILC} for the design of the serial ILC. The feedback controller $ {\rm C}\left( {{q^{ - 1}}} \right) $ is employed to stabilize the discrete-time linear time-invariant (LTI) plant $ {\rm H}\left( {{q^{ - 1}}} \right) $, where $ q^{-1} $ denotes the discrete-time unit delay. The time duration of each operation is finite with $ N+1 $ time samples. Denote $ t \in \left\lbrace 0, 1, \cdots, N \right\rbrace $ be the time instant and $ k $ be the trial index of a certain repetitive task. Then, the repetitive system dynamics is represented as follows:
\begin{equation}\label{eq:SysDynamics_Sec2}
	\left\{ {\begin{aligned}
			{y_k}\left( t \right) &= {G_{{\rm{cs}}}}\left( {{q^{ - 1}}} \right)r_k\left( t \right) + {G_{{\rm{ps}}}}\left( {{q^{ - 1}}} \right)u_k\left( t \right), \\
			u_k^{\rm{mix}}\left( t \right) &= {\rm C}\left( {{q^{ - 1}}} \right){e_k}\left( t \right) + u_k\left( t \right), \\
			{e_k}\left( t \right) &= r_k\left( t \right) - {y_k}\left( t \right),
	\end{aligned}} \right.
\end{equation}
where $ {y_k}\left( t \right) $, $ {e_k}\left( t \right) $, $ u_k\left( t \right) $, $ {u^{\rm{mix}}_k}\left( t \right) $ and $ r_k\left( t \right) $ denote the output, process tracking error, feedforward ILC input, actuator input and trajectory input signal of time $ t $ on trial $ k $ of the single-input single-output (SISO) systems, respectively. The complementary sensitivity $ G_{{\rm{cs}}} $ and process sensitivity $ G_{{\rm{ps}}} $ are respectively transfer functions from $ r_k $ and $ u_k $ to $ y_k $, i.e.,
\begin{equation}\label{eq:TransferFunctions_Sec2}
	{G_{{\rm{cs}}}} = \frac{{{\rm H}{\rm C}}}{{1 + {\rm H}{\rm C}}},\quad {G_{{\rm{ps}}}} = \frac{{\rm H}}{{1 + {\rm H}{\rm C}}}.
\end{equation}
In addition, the initial condition is assumed to be strictly reset to a constant, i.e., $ y_k\left( t \right) = 0 $ for all $ t \le 0 $, and the initial reference signal satisfies $ r_k\left( 0 \right) = 0 $. Then, the system dynamics is transformed into a lifted description:
\begin{equation}\label{eq:LiftedSysDynamics_Sec2}
	{\boldsymbol y_k} = \boldsymbol G \boldsymbol u_k + \boldsymbol G_r \boldsymbol r_k,
\end{equation}
where $ \boldsymbol G_r $ and $ \boldsymbol G $ are impulse response matrices of $ G_{{\rm{cs}}} $ and $ G_{{\rm{ps}}} $, respectively. When both $ {\rm H} $ and $ {\rm C} $ are linear, time-invariant and $ {\rm C} $ is implementable, $ \boldsymbol G $ is block low triangular and block Toeplitz as discussed in \cite{Gunnarsson2001AC,Mishra2010TCST}. The input vector of trial $ k $, i.e., $ {\boldsymbol u_k} $, is a time-ordered vector composed of stacked input signals, as are the output and process trajectory vectors $\boldsymbol y_k$ and $ \boldsymbol r_k$. In particular, the process tracking error in this paper is defined as $ \boldsymbol e_k = \boldsymbol r_k - \boldsymbol y_k $.

\begin{figure}[tb]
	\centering
	\includegraphics[width=3.5in]{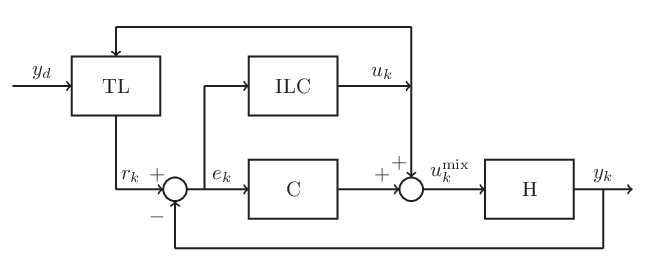}
	\caption{Control block diagram of the end-to-end ILC. \textrm{TL} and \textrm{ILC} are trajectory learning and ILC parts, respectively.}
	\label{fig:ControlDiagram}
\end{figure}

\subsection{Untrackable task}


Define the desired trajectory as $\boldsymbol y_d$. In this paper, we are aiming to track a desired trajectory which is often difficult to track in practical ILC systems, i.e., untrackable task. One example is to complete a tracking task from one point to another as soon as possible. The desired trajectory is a step response, and the controller cannot find a desired input for it with realistic control efforts.

To discuss the untrackable task, the \emph{trackability} property discussed in \cite{Meng2021TAC_TrackabilityRealizability} is introduced. When the desired trajectory is trackable, the process trajectory $ \boldsymbol r_k $ of the closed-loop system (\ref{eq:LiftedSysDynamics_Sec2}) should be the desired trajectory, i.e., $ \boldsymbol r_k = \boldsymbol y_d $. In this case, the ILC system (\ref{eq:LiftedSysDynamics_Sec2}) has a desired input $ \boldsymbol u_d $ satisfying
\begin{equation}\label{eq:DesiredDynamics}
	\boldsymbol y_d = \boldsymbol{\hat G} \boldsymbol u_d,
\end{equation}
where $ \boldsymbol{\hat G} = (\boldsymbol I - \boldsymbol G_r)^{-1}\boldsymbol G $. Then, the \emph{trackability} property is defined as follows.

\begin{definition}[Trackability \citep{Meng2021TAC_TrackabilityRealizability}]\label{def:trackable}
	A desired trajectory $ \boldsymbol y_d $ is trackable for the closed-loop system (\ref{eq:LiftedSysDynamics_Sec2}) if there exists a unique desired input $ \boldsymbol u_d $ fulfilling (\ref{eq:DesiredDynamics}).
\end{definition}

Based on the trackability property in Definition \ref{def:trackable}, this paper tries to give an answer on how to deal with untrackable trajectory references in practice. The idea is to learn a suboptimal trajectory from the experimental data trial by trial, adjusting the trajectory that the closed-loop system follows together with the ILC process. Therefore, the process trajectory $ \boldsymbol r_k $ is introduced and updated trial by trial to approach the desired reference.

To measure the exact tracking performance, the actual error between the desired trajectory and the actual output, which differs from $\boldsymbol e_{k}$, is denoted by $ \boldsymbol {\hat e}_{k} $, where $ \boldsymbol {\hat e}_{k} = \boldsymbol y_{d} - \boldsymbol y_{k} $. Based on the lifted system dynamics (\ref{eq:LiftedSysDynamics_Sec2}), the actual error dynamics is given as
\begin{equation}\label{eq:ErrorDynamics_Sec2}
	\begin{aligned}
		\boldsymbol {\hat e}_{k+1}
		&= \boldsymbol y_{d} - \boldsymbol y_{k+1} \\
		&= \boldsymbol y_{d} - \boldsymbol G \boldsymbol u_{k+1} - \boldsymbol G_r \boldsymbol r_{k+1} \\
		&= \boldsymbol {\hat e}_{k} - \boldsymbol G \Delta\boldsymbol u_{k+1} - \boldsymbol G_r \Delta\boldsymbol r_{k+1},
	\end{aligned}
\end{equation}
where $ \Delta\boldsymbol u_{k+1} = \boldsymbol u_{k+1} - \boldsymbol u_{k} $ and $ \Delta\boldsymbol r_{k+1} = \boldsymbol r_{k+1} - \boldsymbol r_{k} $.

\subsection{Cooperative game perspective}

A new perspective based on cooperative game theory in \cite{Owen2013Book_GameTheory} is introduced to investigate the trial dynamics (\ref{eq:ErrorDynamics_Sec2}) in ILC. The process trajectory $ \boldsymbol r_k $ is introduced to help build the win-win situation together with the ILC input, aiming to achieving a resource-efficient solution to a given untrackable task.

Define a cooperative game with pairs $(\boldsymbol U, v)$ where $\boldsymbol U$ is the set of all players, termed as the grand coalition, and $v$ is the so-called characteristic function that assigns a value to each possible coalition $U \subseteq \boldsymbol U$ of players. Then, $v(U)$ represents the cost of the coalition $U$ when excluding the rest of the players. In this paper, we consider the cooperative game of two players, i.e., the ILC input $\boldsymbol u$ and the process trajectory $\boldsymbol r$, and therefore $\boldsymbol U = \{ \boldsymbol u, \boldsymbol r\}$. The possible coalition $ U $ consists of four cases, i.e., $ U \in \{ \{ \emptyset\}, \{ \boldsymbol u\}, \{ \boldsymbol r\}, \{ \boldsymbol u, \boldsymbol r\} \} $, including the feedback control system only, parallel ILC design, serial ILC design, and the end-to-end ILC developed in this paper.

For simplicity, denote the certain coalition by the corresponding subscript, e.g., $ U_{\boldsymbol u} $, and the characteristic function by omitting the notation $U$, e.g., $v(\boldsymbol u,\boldsymbol r)$. We set $v(\emptyset) = 0$. In particular, the standard cost design of NOILC corresponds to $v(\boldsymbol u,\boldsymbol y_d)$, where the reference signal of the closed-loop system is stationary, i.e., $ \boldsymbol r_{k} = \boldsymbol y_{d} $. We also assume that the cooperative game under consideration is a transferable utility (TU) game \citep{Owen2013Book_GameTheory}, which means the costs can be transferred between different players, i.e., $\boldsymbol u$ and $\boldsymbol r$ in this paper.

\subsection{ILC problem}

In this paper, the objective is to learn the optimal exogenous signals of the ILC input $\boldsymbol u$ and the process trajectory $\boldsymbol r$ for the closed-loop ILC systems. It is equivalent to learning the optimal assignments of the two players with respect to a given characteristic function from the cooperative game perspective. Therefore, the following definition gives the ILC design objective of this paper.

\begin{definition}\label{definition 1}
	The ILC design objective in this paper is to design learning laws for both the ILC input and the process trajectory, i.e.,
	\begin{equation}\label{eq:ILCDesignProblem_Sec2}
		{\left( \boldsymbol u_{k + 1},  \boldsymbol r_{k + 1}\right) } = f\left( \boldsymbol y_d, \boldsymbol e_{k}, \boldsymbol u_{k}, \boldsymbol r_{k} \right),
	\end{equation}
	where $ f(\cdot) $ is a to-be-determined function, to track the given untrackable trajectory reference with lower costs.
\end{definition}

Compared to standard ILC designs, the process trajectory vector $ \boldsymbol r_k $ is introduced to provide a possibility for improvement. Note that the design stated in Definition \ref{definition 1} updates both the ILC input vector $ \boldsymbol u_k $ and the process trajectory vector $ \boldsymbol r_k $ for each trial. This is similar to combining planning and control together, i.e., end-to-end design. In the next section, an end-to-end solution to the ILC problem in Definition \ref{definition 1} is presented.

\section{End-to-End ILC Design}\label{section III}

\subsection{ILC design with trajectory learning}


The problem in Definition \ref{definition 1} can be transformed into the following ILC optimization problem:
\begin{equation}\label{eq:MinOptiProblem_Sec3}
	\begin{aligned}
		&\mathop{\min }\limits \quad J_{k+1}\left(\boldsymbol u_{k + 1},  \boldsymbol r_{k + 1} \right) \\
		&\mathop{\rm s.t.}~~ {\boldsymbol y_k} = \boldsymbol G \boldsymbol u_k + \boldsymbol G_r \boldsymbol r_k.
	\end{aligned}
\end{equation}
In this paper, by drawing on the experience of NOILC, the cost function $ J_{k+1}\left(\boldsymbol u_{k + 1},  \boldsymbol r_{k + 1} \right) $ is designed as follows:
\begin{align}\label{eq:CostUnconstraint_Sec3} 
	J_{k+1}\left(\boldsymbol u_{k + 1},  \boldsymbol r_{k + 1} \right) &= \left\| {\boldsymbol e_{k+1}} \right\|_{\boldsymbol Q}^2 
	\!+\! \left\| {\boldsymbol u_{k+1} \!-\! \boldsymbol u_k} \right\|_{\boldsymbol R}^2 \!+\! \left\| {\boldsymbol u_{k+1}} \right\|_{{\boldsymbol S}}^2 \nonumber \\
	& + \left\| {\boldsymbol y_d \!-\! \boldsymbol r_{k+1}} \right\|_{\boldsymbol W}^2 \!+\! \left\| {\boldsymbol r_{k+1}} \right\|_{{\boldsymbol W}_{\!r}}^2,
\end{align}
where all lifted weighting matrices have a similar form as, e.g., $ \boldsymbol Q = {\rm diag}\left\lbrace Q, Q, \cdots, Q \right\rbrace $, and the weighting parameters $ Q,R,S,W,W_r $ are non-negative.

Note that (\ref{eq:MinOptiProblem_Sec3}) is a multivariable optimization problem. To derive the update laws, let the derivative of the cost function with respect to both $ \boldsymbol r_{k+1} $ and $ \boldsymbol u_{k+1} $ be $ \boldsymbol 0 $, respectively, yielding
\begin{equation}\label{eq:Derivate_rk+1_Sec3}
	\begin{aligned}
		\left[(\boldsymbol I \!-\! \boldsymbol G_r)^{\top} \boldsymbol Q \right.& \left.(\boldsymbol I \!-\! \boldsymbol G_r) \!+ \boldsymbol W \!+\! \boldsymbol W_{\!r} \right]\boldsymbol r_{k+1} \\
		&= (\boldsymbol I - \boldsymbol G_r)^{\top} \boldsymbol Q \boldsymbol G \boldsymbol u_{k+1} + \boldsymbol W \boldsymbol y_d,
	\end{aligned}
\end{equation}
and
\begin{equation}\label{eq:Derivate_uk+1_Sec3}
	\left(\boldsymbol G^{\top}\!\boldsymbol Q\boldsymbol G \!+\! \boldsymbol R \!+\!  \boldsymbol S\right)\!\boldsymbol u_{k+1} \!=\! \boldsymbol R \boldsymbol u_k + \boldsymbol G^{\top}\!\boldsymbol Q (\boldsymbol I - \boldsymbol G_r) \boldsymbol r_{k+1}.
\end{equation}
When the weighting parameters are chosen to be non-zero, $ [(\boldsymbol I \!-\! \boldsymbol G_r)^{\top} \boldsymbol Q (\boldsymbol I \!-\! \boldsymbol G_r)\!+ \boldsymbol W \!+\! \boldsymbol W_{\!r}] $ and $ (\boldsymbol G^{\top}\!\boldsymbol Q\boldsymbol G \!+\! \boldsymbol R \!+\! \boldsymbol S) $ are invertible. Then, substituting (\ref{eq:Derivate_uk+1_Sec3}) to (\ref{eq:Derivate_rk+1_Sec3}) yields
\begin{align}
	\boldsymbol r_{k+1} &=\boldsymbol T_r \boldsymbol u_k + \boldsymbol L_r \boldsymbol y_d, \label{eq:UpdateLaw_rk+1_Sec3} \\
	\boldsymbol u_{k+1} &=\boldsymbol T_u \boldsymbol u_k + \boldsymbol L_u \boldsymbol r_{k+1}, \label{eq:UpdateLaw_uk+1_Sec3} 
\end{align}
where $ \boldsymbol T_r \!=\! \boldsymbol \rho^{-1} (\boldsymbol I \!-\! \boldsymbol G_r)^{\top} \boldsymbol Q \boldsymbol G \boldsymbol T_u $,
$\boldsymbol L_r \!=\! \boldsymbol \rho^{-1} \boldsymbol W $, and
\begin{align}
	\boldsymbol \rho &= (\boldsymbol I \!-\! \boldsymbol G_r )\!^{\top}\![\boldsymbol Q^{-1} \!\!\!+\! \boldsymbol G(\boldsymbol R\!+\!\boldsymbol S)^{-1}\!\boldsymbol G^{\top}\!]^{-1}(\boldsymbol I \!-\! \boldsymbol G_r) \!+\! \boldsymbol W \!\!+\!\! \boldsymbol W_{\!r}, \nonumber \\
	\boldsymbol T_u &= \left(\boldsymbol G^{\top}\boldsymbol Q\boldsymbol G + \boldsymbol R +  \boldsymbol S\right)^{-1} \boldsymbol R, \nonumber \\
	\boldsymbol L_u &= \left(\boldsymbol G^{\top}\boldsymbol Q\boldsymbol G + \boldsymbol R + \boldsymbol S\right)^{-1} \boldsymbol G^{\top}\boldsymbol Q\left(\boldsymbol I \!-\! \boldsymbol G_r \right). \nonumber 
\end{align}

The ILC structure using (\ref{eq:UpdateLaw_rk+1_Sec3}) and (\ref{eq:UpdateLaw_uk+1_Sec3}) is illustrated by the control block diagram in Fig. \ref{fig:ControlDiagram}, which is the combination of parallel and serial ILC. The similar collaborated architecture has been investigated in \cite{deRoover1997thesis}, where the practical implementation instructions are discussed. The aim to introduce this architecture in this paper is to improve the performance for the repetitive untrackable tasks.

\begin{rem}
	There exists a trade-off between the $ \boldsymbol Q $-item $ \Vert {\boldsymbol e_{k+1}} \Vert_{\boldsymbol Q}^2 = \Vert { \boldsymbol r_{k+1} \!-\! \boldsymbol y_{k+1}} \Vert_{\boldsymbol Q}^2 $ and the $ \boldsymbol W $-item $ \Vert {\boldsymbol y_d \!-\! \boldsymbol r_{k+1}} \Vert_{\boldsymbol W}^2 $ in the defined cost function (\ref{eq:CostUnconstraint_Sec3}). If the process trajectory $ \boldsymbol r_{k+1} $ moves too far towards the desired one in a trial, the process tracking error $ \boldsymbol e_{k+1} $ will increase, probably resulting in an increase in the designed cost function (\ref{eq:CostUnconstraint_Sec3}). In this sense, the cost function (\ref{eq:CostUnconstraint_Sec3}) introduces a trajectory learning ability during the ILC process.
\end{rem}


\section{Stability Analysis}\label{section IV}

\subsection{Convergence analysis}

First, the convergence of the control effort is given. Reformulating the update laws (\ref{eq:UpdateLaw_rk+1_Sec3}) and (\ref{eq:UpdateLaw_uk+1_Sec3}) yields
\begin{equation}\label{eq:uk+1&uk_Sec4}
	\begin{aligned}
		\boldsymbol u_{k+1} 
		&=\boldsymbol T_u \boldsymbol u_k + \boldsymbol L_u (\boldsymbol T_r \boldsymbol u_k + \boldsymbol L_r \boldsymbol y_d) \\
		&=(\boldsymbol T_u + \boldsymbol L_u\boldsymbol T_r)\boldsymbol u_k + \boldsymbol L_u\boldsymbol L_r\boldsymbol y_d.
	\end{aligned}
\end{equation}
The sufficient condition for the convergence of control effort in the trial domain is given in the following lemma.
\begin{lemma}\label{lemma_1_Sec4}
	If the norm condition
	\begin{equation}\label{eq:ConvergenceCondition_Sec4}
		\left\| \boldsymbol T_u + \boldsymbol L_u\boldsymbol T_r \right\| < 1,
	\end{equation}
	is satisfied, where matrices in (\ref{eq:ConvergenceCondition_Sec4}) consist of the weighting matrices designed before, the end-to-end ILC design (\ref{eq:UpdateLaw_rk+1_Sec3}) and (\ref{eq:UpdateLaw_uk+1_Sec3}) converges in norm as $k \rightarrow \infty$ and the actual error $ \boldsymbol {\hat e}_{k} $ satisfies
	\begin{equation}
		\boldsymbol {\hat e}_{\infty} = [\boldsymbol I - (\boldsymbol G + \boldsymbol G_r\boldsymbol T_r)(\boldsymbol I - \boldsymbol \xi)^{-1}\boldsymbol L_u\boldsymbol L_r - \boldsymbol G_r\boldsymbol L_r]\boldsymbol y_d,
	\end{equation}
	where $\boldsymbol \xi = \boldsymbol T_u + \boldsymbol L_u\boldsymbol T_r$.
\end{lemma}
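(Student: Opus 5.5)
The plan is to treat the recursion (\ref{eq:uk+1&uk_Sec4}) as an affine iteration $\boldsymbol u_{k+1} = \boldsymbol \xi \boldsymbol u_k + \boldsymbol L_u\boldsymbol L_r\boldsymbol y_d$. Its fixed point is found first, and contraction about that point is then established. Under (\ref{eq:ConvergenceCondition_Sec4}), every eigenvalue $\lambda$ of $\boldsymbol \xi$ satisfies $|\lambda| \le \|\boldsymbol \xi\| < 1$, so $\boldsymbol I - \boldsymbol \xi$ is invertible (equivalently, a Neumann series argument applies). Hence the candidate limit
\[
\boldsymbol u_\infty = (\boldsymbol I - \boldsymbol \xi)^{-1}\boldsymbol L_u\boldsymbol L_r\boldsymbol y_d
\]
is well defined, and it is the unique solution of $\boldsymbol u = \boldsymbol \xi\boldsymbol u + \boldsymbol L_u\boldsymbol L_r\boldsymbol y_d$.

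Next, subtract the fixed-point equation from (\ref{eq:uk+1&uk_Sec4}). This gives $\boldsymbol u_{k+1} - \boldsymbol u_\infty = \boldsymbol \xi(\boldsymbol u_k - \boldsymbol u_\infty)$, so $\|\boldsymbol u_k - \boldsymbol u_\infty\| \le \|\boldsymbol \xi\|^k \|\boldsymbol u_0 - \boldsymbol u_\infty\| \to 0$ for any initial $\boldsymbol u_0$. Since $\boldsymbol r_{k+1} = \boldsymbol T_r\boldsymbol u_k + \boldsymbol L_r\boldsymbol y_d$ by (\ref{eq:UpdateLaw_rk+1_Sec3}), continuity gives
\[
\boldsymbol r_k \to \boldsymbol r_\infty = \boldsymbol T_r\boldsymbol u_\infty + \boldsymbol L_r\boldsymbol y_d,
\]
with $\|\boldsymbol r_{k+1} - \boldsymbol r_\infty\| \le \|\boldsymbol T_r\|\,\|\boldsymbol u_k - \boldsymbol u_\infty\|$. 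Both players therefore converge in norm.

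Finally, the lifted dynamics (\ref{eq:LiftedSysDynamics_Sec2}) and the definition $\boldsymbol {\hat e}_k = \boldsymbol y_d - \boldsymbol y_k$ give
\[
\boldsymbol {\hat e}_\infty = \boldsymbol y_d - \boldsymbol G\boldsymbol u_\infty - \boldsymbol G_r\boldsymbol r_\infty = \boldsymbol y_d - (\boldsymbol G + \boldsymbol G_r\boldsymbol T_r)\boldsymbol u_\infty - \boldsymbol G_r\boldsymbol L_r\boldsymbol y_d .
\]
Substituting $\boldsymbol u_\infty$ and factoring out $\boldsymbol y_d$ yields the stated expression. One could also arrive at the same limit through the error recursion (\ref{eq:ErrorDynamics_Sec2}), but the direct substitution is simpler.

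I expect no real obstacle. The only subtle point is the invertibility of $\boldsymbol I - \boldsymbol \xi$, which follows from the norm bound via the spectral radius as above, together with being careful that the index shift between $\boldsymbol r_{k+1}$ and $\boldsymbol u_k$ does not affect the limits.
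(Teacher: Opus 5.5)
Your proposal is correct and follows the same route as the paper. The paper writes the iterate in closed form as $\boldsymbol u_k = \boldsymbol\xi^k\boldsymbol u_0 + (\boldsymbol I-\boldsymbol\xi)^{-1}(\boldsymbol I-\boldsymbol\xi^k)\boldsymbol L_u\boldsymbol L_r\boldsymbol y_d$, which is your deviation identity $\boldsymbol u_k-\boldsymbol u_\infty=\boldsymbol\xi^k(\boldsymbol u_0-\boldsymbol u_\infty)$ unrolled, and it then obtains $\boldsymbol r_\infty$ and $\boldsymbol{\hat e}_\infty$ by the same direct substitution; your spectral-radius justification that $\boldsymbol I-\boldsymbol\xi$ is invertible fills a step the paper uses without comment.
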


\begin{pf}
	It follows from (\ref{eq:uk+1&uk_Sec4}) that
	\begin{equation}\label{eq:uk&u0_Lemma1_Sec4}
		\boldsymbol u_{k} = \boldsymbol \xi^{k}\boldsymbol u_{0} + (\boldsymbol I - \boldsymbol \xi)^{-1}(\boldsymbol I - \boldsymbol \xi^{k})\boldsymbol L_u\boldsymbol L_r\boldsymbol y_d,
	\end{equation}
	which means the ILC control effort $\boldsymbol u_{k}$ converges if (\ref{eq:ConvergenceCondition_Sec4}) is satisfied. When $k \rightarrow \infty$, $\boldsymbol u_{\infty} = (\boldsymbol I - \boldsymbol \xi)^{-1}\boldsymbol L_u\boldsymbol L_r\boldsymbol y_d$. Then, it follows from (\ref{eq:UpdateLaw_rk+1_Sec3}) that the process trajectory $\boldsymbol r_{k}$ converges and $\boldsymbol r_{\infty} = [\boldsymbol T_r(\boldsymbol I - \boldsymbol \xi)^{-1}\boldsymbol L_u + \boldsymbol I]\boldsymbol L_r\boldsymbol y_d$. Then, the steady error $\boldsymbol {\hat e}_{\infty}$ is
	\begin{equation}
		\begin{aligned}
			\boldsymbol {\hat e}_{\infty} 
			&= \boldsymbol y_{d} - \boldsymbol y_{\infty} \\
			&= \boldsymbol y_{d} - \boldsymbol G\boldsymbol u_{\infty} - \boldsymbol G_r\boldsymbol r_{\infty} \\
			&= [\boldsymbol I - (\boldsymbol G + \boldsymbol G_r\boldsymbol T_r)(\boldsymbol I - \boldsymbol \xi)^{-1}\boldsymbol L_u\boldsymbol L_r - \boldsymbol G_r\boldsymbol L_r]\boldsymbol y_d,
		\end{aligned}
	\end{equation}
	which completes the proof.
\end{pf}

\subsection{Cooperative game analysis}

Given an untrackable $\boldsymbol y_d$ in Fig. \ref{fig:ControlDiagram}, there could be several suboptimal solutions with respect to a given cost function. These solutions reflect different coordinations between the two players $\boldsymbol r_k$ and $\boldsymbol u_k$, depending on how the process trajectory $\boldsymbol r_{k}$ is and how the cost function penalizes the error and control energy. For instance, when the two players learn to approach a step-response-shaped task, it is significant to choose a solution that balances the energy consumption and accuracy requirement.

From the cooperative game perspective, the coalition exists to achieve a better performance. If one of the two players can better solve the tracking problem alone, there is no need to build the coalition, which means the coalition is internally unstable. The internal stability is introduced in the following definition.

\begin{definition}[Internal stability of a coalition]\label{def:InternalStability_sec4}
	For $\forall U \subseteq \boldsymbol U$, $U$ is internally stable if $\nexists i \in U$ such that $v(U\backslash\{i\}) > v(U)$.
\end{definition}

Here, the corresponding external stability means that no extra players can have a better payoff if joining the coalition. The external stability of the grand coalition $\boldsymbol U$ is vacuously true since there are no other players in the considered two-player game. Therefore, if $\boldsymbol U$ is internally stable, $\boldsymbol U$ is a stable set of the cooperative game $(\boldsymbol U,v)$ as investigated in \cite{vonNeumann1947Book_GameTheory}. The internal stability demands a better characteristic value of a coalition when including a player, which gives the necessity of forming a coalition.

To show the necessity of forming the two-player coalition, the definition of convex game is given as follows.

\begin{definition}[Convex cooperative game \citep{Shapley1971_ConvexCoopGame}]
	The cooperative game $(\boldsymbol U,v)$ is convex if $\forall U,V \subseteq \boldsymbol U$, it is satisfied that $v(U) + v(V) \le v(U\cup V) + v(U \cap V)$.
\end{definition}

Considering the two-player case, the convex game collapses to a superadditive game where $v(U) + v(V) \le v(U\cup V)$. Therefore, if this superadditive condition is satisfied, the grand coalition $\boldsymbol U$ of the considered two-player game is internally stable according to Definition \ref{def:InternalStability_sec4} and hence a stable set. It is also the unique core of the two-player convex game.

To analyze the developed end-to-end ILC, introduce the cost of NOILC in the given cost function (\ref{eq:CostUnconstraint_Sec3}) as the baseline cost $V^0_{k}$, i.e., $V^0_{k} = J_{k}(\boldsymbol u_{k},\boldsymbol y_d)$. Define the characteristic function of the cooperative game $(\boldsymbol U, v)$ on trial $k$ as
\begin{equation}\label{eq:CharaFunc_Sec4}
	v_{k}(U) = V^0_{k} - J_{k}(U).
\end{equation}

Then, the following theorem finds a simple condition such that the end-to-end ILC design (\ref{eq:UpdateLaw_rk+1_Sec3}) and (\ref{eq:UpdateLaw_uk+1_Sec3}) can make the closed-loop tracking problem in Fig. \ref{fig:ControlDiagram} a convex cooperative game. In other words, the necessity of introducing the two-player end-to-end ILC design is demonstrated.

\begin{thm}\label{theorem_0_Sec4}
	Given symmetric weighting matrices $\boldsymbol Q, \boldsymbol W \succ 0$ and $\boldsymbol S,\boldsymbol W_{\!r} \succeq 0$ satisfying (\ref{eq:ConvergenceCondition_Sec4}) in Lemma \ref{lemma_1_Sec4} and initial ILC input $\boldsymbol u_0 =\boldsymbol 0$, the coalition between the ILC feedforward input $\boldsymbol u_k$ and the process trajectory $\boldsymbol r_k$ is a stable set of the cooperative game $(\boldsymbol U,v)$ if
	\begin{equation}\label{Theorem0_Condition2_Sec4}
		2 \boldsymbol L_u\boldsymbol T_r + \boldsymbol T_u \succeq \boldsymbol I,
	\end{equation}
	with symmetric $\boldsymbol R$.
\end{thm}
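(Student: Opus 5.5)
The argument reduces the stable-set claim to superadditivity of $v_k$. For the two-player game $\boldsymbol U=\{\boldsymbol u,\boldsymbol r\}$, the discussion after the definition of convex games shows that if $v_k(\boldsymbol u)+v_k(\boldsymbol r)\le v_k(\boldsymbol u,\boldsymbol r)$ (with $v_k(\emptyset)=0$), then $\boldsymbol U$ is internally stable in the sense of Definition~\ref{def:InternalStability_sec4}. Since external stability holds vacuously, $\boldsymbol U$ is then a stable set. Using (\ref{eq:CharaFunc_Sec4}), superadditivity is equivalent to
\begin{equation*}
V^0_k - J_k(\boldsymbol u_k,\boldsymbol r_k) \;\ge\; \big(V^0_k - J_k(\boldsymbol u)\big) + \big(V^0_k - J_k(\boldsymbol r)\big).
\end{equation*}
Here the one-player coalition $\{\boldsymbol u\}$ is standard NOILC with $\boldsymbol r=\boldsymbol y_d$, so $J_k(\boldsymbol u)=V^0_k$ and $v_k(\boldsymbol u)=0$. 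The serial coalition $\{\boldsymbol r\}$ keeps $\boldsymbol u=\boldsymbol 0$. The requirement therefore collapses to
\begin{equation*}
J_k(\boldsymbol u_k,\boldsymbol r_k)+V^0_k \le 2J_k(\boldsymbol 0,\boldsymbol r_k^{\rm s}),
\end{equation*}
or, in the more conservative form I will aim for, to comparing the end-to-end cost against the cost in which $\boldsymbol u$ is switched off.

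The key step is to express every cost as a quadratic form in a common variable. I will proceed in three stages.
\begin{itemize}
\item[(i)] Because $\boldsymbol u_0=\boldsymbol 0$, (\ref{eq:uk&u0_Lemma1_Sec4}) gives $\boldsymbol u_k=(\boldsymbol I-\boldsymbol\xi)^{-1}(\boldsymbol I-\boldsymbol\xi^k)\boldsymbol L_u\boldsymbol L_r\boldsymbol y_d$, and (\ref{eq:UpdateLaw_rk+1_Sec3}) gives $\boldsymbol r_k=\boldsymbol T_r\boldsymbol u_{k-1}+\boldsymbol L_r\boldsymbol y_d$. Every quantity is thus linear in $\boldsymbol y_d$.
\item[(ii)] Since $(\boldsymbol u_k,\boldsymbol r_k)$ satisfies the stationarity conditions (\ref{eq:Derivate_rk+1_Sec3})--(\ref{eq:Derivate_uk+1_Sec3}) of the strictly convex quadratic $J_k$ (strict because $\boldsymbol Q,\boldsymbol W\succ0$ and $\boldsymbol R,\boldsymbol S\succeq0$ are symmetric), each competing cost can be written as the optimal cost plus an exact Hessian quadratic form in the deviation. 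Concretely,
\begin{equation*}
J_k(\boldsymbol u,\boldsymbol r)-J_k(\boldsymbol u_k,\boldsymbol r_k)=\|\boldsymbol G\delta\boldsymbol u-(\boldsymbol I-\boldsymbol G_r)\delta\boldsymbol r\|^2_{\boldsymbol Q}+\|\delta\boldsymbol u\|^2_{\boldsymbol R+\boldsymbol S}+\|\delta\boldsymbol r\|^2_{\boldsymbol W+\boldsymbol W_r}.
\end{equation*}
\item[(iii)] Apply this identity to the NOILC point, where $\delta\boldsymbol r=\boldsymbol y_d-\boldsymbol r_k$, and to the trajectory-only point, where $\delta\boldsymbol u=-\boldsymbol u_k$. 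The two deviations are then linked through $\boldsymbol r_k-\boldsymbol L_r\boldsymbol y_d=\boldsymbol T_r\boldsymbol u_{k-1}$ and $\boldsymbol u_k=\boldsymbol T_u\boldsymbol u_{k-1}+\boldsymbol L_u\boldsymbol r_k$.
\end{itemize}

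After substitution, the superadditivity inequality becomes the nonnegativity of a quadratic form $\boldsymbol u_{k-1}^\top \boldsymbol M\boldsymbol u_{k-1}$ plus cross terms. I expect $\boldsymbol M$ to factor as (positive weight)$\times(2\boldsymbol L_u\boldsymbol T_r+\boldsymbol T_u-\boldsymbol I)$: the doubled $\boldsymbol L_u\boldsymbol T_r$ comes from the factor $2$ on the right-hand side, and $\boldsymbol T_u-\boldsymbol I$ from the $\boldsymbol R$-increment $\boldsymbol u_k-\boldsymbol u_{k-1}$. Under condition (\ref{Theorem0_Condition2_Sec4}) this form is then $\succeq0$. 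Symmetry of $\boldsymbol R$ is needed here so that $\boldsymbol T_u=(\boldsymbol G^\top\boldsymbol Q\boldsymbol G+\boldsymbol R+\boldsymbol S)^{-1}\boldsymbol R$ produces a symmetric-compatible form.

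The main obstacle is this algebraic reduction: turning the difference of costs into a form whose sign is controlled exactly by $2\boldsymbol L_u\boldsymbol T_r+\boldsymbol T_u-\boldsymbol I$. My first attempt will work in the weighted inner product induced by $\boldsymbol G^\top\boldsymbol Q\boldsymbol G+\boldsymbol R+\boldsymbol S$, in which $\boldsymbol T_u$ and $\boldsymbol L_u$ act like projections. If the cross terms do not cancel, I will fall back to bounding them by Cauchy--Schwarz, using the convergence condition (\ref{eq:ConvergenceCondition_Sec4}) to keep the trial dependence uniform in $k$.
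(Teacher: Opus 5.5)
There is a genuine gap, and it starts at the reduction. In the paper's indexing, $v_k(\boldsymbol u)=0$, so superadditivity is exactly $J(\boldsymbol u_{k+1},\boldsymbol r_{k+1})\le J(\{\boldsymbol r\})$. That is the form you call ``more conservative''. Your collapsed inequality $J_k(\boldsymbol u_k,\boldsymbol r_k)+V^0_k\le 2J_k(\boldsymbol 0,\boldsymbol r_k^{\mathrm s})$ is an algebra slip. It is what brings the NOILC point into step (iii) and makes you expect the factor $2$ from a right-hand side.

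More importantly, you never fix the cost of the coalition $\{\boldsymbol r\}$, and that choice decides everything. The paper keeps the same $\boldsymbol r_{k+1}$ and takes $\boldsymbol u\equiv\boldsymbol 0$. The $\boldsymbol W$-terms then cancel, and no $\boldsymbol R$- or $\boldsymbol S$-term survives on the serial side. The target is $\Vert\boldsymbol r_{k+1}-\boldsymbol y_{k+1}\Vert_{\boldsymbol Q}^2+\Vert\boldsymbol u_{k+1}-\boldsymbol u_k\Vert_{\boldsymbol R}^2+\Vert\boldsymbol u_{k+1}\Vert_{\boldsymbol S}^2\le\Vert(\boldsymbol I-\boldsymbol G_r)\boldsymbol r_{k+1}\Vert_{\boldsymbol Q}^2$.

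Your identity (ii) is correct, but by itself it cannot produce the hypothesis (\ref{Theorem0_Condition2_Sec4}). There are two cases.
\begin{itemize}
\item If the switched-off cost is evaluated in the same $J$, so that it contains $\Vert\boldsymbol u_k\Vert_{\boldsymbol R}^2$, then (ii) settles the comparison immediately for every $\boldsymbol r$. Its right-hand side is manifestly nonnegative, and the hypothesis is never used.
\item Under the paper's convention, (ii) with $\delta\boldsymbol u=-\boldsymbol u_{k+1}$, $\delta\boldsymbol r=\boldsymbol 0$ only converts the target into $\Vert\boldsymbol u_{k+1}\Vert^2_{\boldsymbol G^\top\boldsymbol Q\boldsymbol G+\boldsymbol R+\boldsymbol S}\ge\Vert\boldsymbol u_k\Vert^2_{\boldsymbol R}$. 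Optimality does not provide this.
\end{itemize}
So the factorization you anticipate in (iii) is not a consequence of (ii); it comes from an explicit computation.

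That computation is the paper's. Substitute $\boldsymbol u_{k+1}=\boldsymbol T_u\boldsymbol u_k+\boldsymbol L_u\boldsymbol r_{k+1}$. Use $\boldsymbol T_u^\top\boldsymbol G^\top\boldsymbol Q(\boldsymbol I-\boldsymbol G_r)=\boldsymbol R\boldsymbol L_u$; this is where symmetry of $\boldsymbol R$ enters, and the two resulting equal cross terms are the actual source of the $2$. Then substitute $\boldsymbol r_{k+1}=\boldsymbol T_r\boldsymbol u_k+\boldsymbol L_r\boldsymbol y_d$. The target becomes
\begin{equation*}
\boldsymbol u_k^\top\boldsymbol R(2\boldsymbol L_u\boldsymbol T_r+\boldsymbol T_u-\boldsymbol I)\boldsymbol u_k+2\boldsymbol u_k^\top\boldsymbol R\boldsymbol L_u\boldsymbol L_r\boldsymbol y_d+\boldsymbol r_{k+1}^\top\boldsymbol L_u^\top\boldsymbol G^\top\boldsymbol Q(\boldsymbol I-\boldsymbol G_r)\boldsymbol r_{k+1}\ge 0 .
\end{equation*}
The first term is controlled by (\ref{Theorem0_Condition2_Sec4}), and the last is positive semidefinite. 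The linear term is the crux, and your fallback cannot handle it. Cauchy--Schwarz bounds its size, not its sign, and neither (\ref{Theorem0_Condition2_Sec4}) nor (\ref{eq:ConvergenceCondition_Sec4}) supplies a strict margin to absorb it.

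The missing idea is a sign argument based on $\boldsymbol u_0=\boldsymbol 0$. Then $\boldsymbol u_k=\sum_{j=0}^{k-1}\boldsymbol\xi^j\boldsymbol z$ with $\boldsymbol z=\boldsymbol L_u\boldsymbol L_r\boldsymbol y_d$. Since $\boldsymbol T_r=\boldsymbol\rho^{-1}\boldsymbol L_u^\top\boldsymbol R$, the matrix $\boldsymbol\xi=\boldsymbol T_u+\boldsymbol L_u\boldsymbol\rho^{-1}\boldsymbol L_u^\top\boldsymbol R$ is symmetric positive semidefinite for the scalar lifted weights. Hence $\boldsymbol u_k^\top\boldsymbol R\boldsymbol z\ge 0$. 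You record $\boldsymbol u_0=\boldsymbol 0$ in step (i) but never use it for a sign, and that is precisely where the paper needs it.
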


\begin{pf}
	For the two-player cooperative game $(\boldsymbol U,v)$, a coalition being a stable set is equivalent to the coalition being internally stable. Then, it is equivalent to prove that
	\begin{equation}\label{eq:ProofTheorem0_0_Sec4}
		v(\boldsymbol u_{k+1}) + v(\boldsymbol r_{k+1}) \le v(\boldsymbol u_{k+1},\boldsymbol r_{k+1}).
	\end{equation}
	In the closed-loop ILC system in Fig. \ref{fig:ControlDiagram}, $\boldsymbol r_{k+1} = \boldsymbol y_d$ when the player $\boldsymbol r_{k+1}$ is not involved and $\boldsymbol u_{k+1} = \boldsymbol 0$ means $U = \{\boldsymbol r\}$. Then, employing (\ref{eq:CharaFunc_Sec4}) yields $v(\boldsymbol u_{k+1}) = V^0_{k}$ and
	\begin{equation}
		v(\boldsymbol r_{k+1}) = J_{k}(\boldsymbol u_{k+1},\boldsymbol y_d) - J_{k}(\boldsymbol 0,\boldsymbol r_{k+1}).
	\end{equation}
	Then, it follows from (\ref{eq:CostUnconstraint_Sec3}) that
	\begin{equation}
		\begin{aligned}
			\Vert \boldsymbol r_{k+1} - \boldsymbol y_{k+1} \Vert_{\boldsymbol Q}^2 + \Vert \boldsymbol u_{k+1} - &\boldsymbol u_k \Vert_{\boldsymbol R}^2 + \Vert \boldsymbol u_{k+1} \Vert_{{\boldsymbol S}}^2 \\
			&\le \Vert \boldsymbol r_{k+1} - \boldsymbol G_r\boldsymbol r_{k+1} \Vert_{{\boldsymbol Q}}^2,
		\end{aligned}
	\end{equation}
	and hence it suffices to prove that
	\begin{equation}\label{eq:ProofTheorem0_1_Sec4}
		\begin{aligned}
			\!\!\!\!\!\Vert\boldsymbol y_{k+1} \Vert_{\boldsymbol Q}^2 - 2\boldsymbol r^{\top}_{k+1}&\boldsymbol Q\boldsymbol y_{k+1} + \Vert {\boldsymbol u_{k+1} - \boldsymbol u_k} \Vert_{\boldsymbol R}^2 + \Vert {\boldsymbol u_{k+1}} \Vert_{{\boldsymbol S}}^2 \\
			&\le \Vert \boldsymbol G_r\boldsymbol r_{k+1} \Vert_{{\boldsymbol Q}}^2 - 2\boldsymbol r^{\top}_{k+1}\boldsymbol Q\boldsymbol G_r\boldsymbol r_{k+1}.
		\end{aligned}
	\end{equation}
	Substituting the system dynamics (\ref{eq:LiftedSysDynamics_Sec2}) into (\ref{eq:ProofTheorem0_1_Sec4}) gives
	\begin{equation}
		\begin{aligned}
			\Vert\boldsymbol G\boldsymbol u_{k+1} \Vert_{\boldsymbol Q}^2 + \Vert \boldsymbol u_{k+1} - &\boldsymbol u_k \Vert_{\boldsymbol R}^2 + \Vert \boldsymbol u_{k+1} \Vert_{{\boldsymbol S}}^2 \\
			&\le 2\boldsymbol u^{\top}_{k+1}\boldsymbol G^{\top}\boldsymbol Q(\boldsymbol I - \boldsymbol G_r)\boldsymbol r_{k+1},
		\end{aligned}
	\end{equation}
	which yields
	\begin{equation}
		\begin{aligned}
			\boldsymbol u^{\top}_{k+1}(\boldsymbol G^{\top}\boldsymbol Q&\boldsymbol G + \boldsymbol R + \boldsymbol S)
			\boldsymbol u_{k+1} - 2\boldsymbol u^{\top}_{k}\boldsymbol R\boldsymbol u_{k+1} \\
			&+ \boldsymbol u^{\top}_{k}\boldsymbol R\boldsymbol u_{k} \le 2\boldsymbol u^{\top}_{k+1}\boldsymbol G^{\top}\boldsymbol Q(\boldsymbol I - \boldsymbol G_r)\boldsymbol r_{k+1},
		\end{aligned}
	\end{equation}
	i.e.,
	\begin{equation}
			\boldsymbol u^{\top}_{k+1}\boldsymbol G^{\top}\boldsymbol Q(\boldsymbol I - \boldsymbol G_r)\boldsymbol r_{k+1} + \boldsymbol u^{\top}_{k}\boldsymbol R\boldsymbol u_{k+1} - \boldsymbol u^{\top}_{k}\boldsymbol R\boldsymbol u_{k} \ge 0.
	\end{equation}
	Then, eliminating $\boldsymbol u_{k+1}$ via (\ref{eq:UpdateLaw_uk+1_Sec3}) gives rise to
	\begin{equation}\label{eq:ProofTheorem1_3.1_Sec4}
		\begin{aligned}
			&\boldsymbol u^{\top}_{k}[\boldsymbol T^{\top}_u\boldsymbol G^{\top}\boldsymbol Q(\boldsymbol I - \boldsymbol G_r) + \boldsymbol R \boldsymbol L_u] \boldsymbol r_{k+1} \\
			&\!\!+ \boldsymbol u^{\top}_{k}(\boldsymbol R\boldsymbol T_u \!-\! \boldsymbol R)\boldsymbol u_{k} \!+\! \boldsymbol r^{\top}_{k+1}\boldsymbol L^{\top}_u\boldsymbol G^{\top}\boldsymbol Q(\boldsymbol I \!-\! \boldsymbol G_r)\boldsymbol r_{k+1} \ge 0.
		\end{aligned}
	\end{equation}
	Since $(\boldsymbol G^{\top}\boldsymbol Q\boldsymbol G + \boldsymbol R +  \boldsymbol S)^{-1}$ is symmetric, and thus $\boldsymbol T^{\top}_u\boldsymbol G^{\top}\boldsymbol Q(\boldsymbol I - \boldsymbol G_r) = \boldsymbol R \boldsymbol L_u$, then (\ref{eq:ProofTheorem1_3.1_Sec4}) can be rewritten as
	\begin{equation}
		\begin{aligned}
			&2\boldsymbol u^{\top}_{k}\boldsymbol R \boldsymbol L_u \boldsymbol L_r\boldsymbol y_d + \boldsymbol u^{\top}_{k}(2\boldsymbol R \boldsymbol L_u\boldsymbol T_r + \boldsymbol R\boldsymbol T_u \!-\! \boldsymbol R)\boldsymbol u_{k} \\
			&\hspace{2.5cm} +\boldsymbol r^{\top}_{k+1}\boldsymbol L^{\top}_u\boldsymbol G^{\top}\boldsymbol Q(\boldsymbol I \!-\! \boldsymbol G_r)\boldsymbol r_{k+1} \ge 0.
		\end{aligned}
	\end{equation}
	Note that
	\begin{align}
		\boldsymbol L^{\top}_u\boldsymbol G^{\top}&\boldsymbol Q(\boldsymbol I \!-\! \boldsymbol G_r) = (\boldsymbol I \!-\! \boldsymbol G_r)^{\top}\boldsymbol Q\boldsymbol G \times \notag \\
		&(\boldsymbol G^{\top}\boldsymbol Q\boldsymbol G + \boldsymbol R + \boldsymbol S)^{-1}\boldsymbol G^{\top}\boldsymbol Q(\boldsymbol I \!-\! \boldsymbol G_r) \succ 0,
	\end{align}
	and $2\boldsymbol R \boldsymbol L_u\boldsymbol T_r + \boldsymbol R\boldsymbol T_u \!-\! \boldsymbol R \succeq 0$ as given in condition (\ref{Theorem0_Condition2_Sec4}), it suffices to prove
	\begin{equation}\label{eq:ProofTheorem1_6_Sec4}
		\boldsymbol u^{\top}_{k} \boldsymbol R \boldsymbol L_u\boldsymbol L_r \boldsymbol y_d \ge 0.
	\end{equation}
	Eliminating $\boldsymbol u_{k}$ via (\ref{eq:uk&u0_Lemma1_Sec4}) in Lemma \ref{lemma_1_Sec4} gives rise to
	\begin{equation}
		\begin{aligned}
			\boldsymbol u^{\top}_{k}\boldsymbol R &\boldsymbol L_u\boldsymbol L_r \boldsymbol y_d = \boldsymbol u^{\top}_0(\boldsymbol \xi^{k})^{\top}\boldsymbol R\boldsymbol L_u\boldsymbol L_r\boldsymbol y_d  \\ 
			&+\boldsymbol y^{\top}_d\boldsymbol L^{\top}_r\boldsymbol L^{\top}_u[(\boldsymbol I - \boldsymbol \xi)^{-1}(\boldsymbol I - \boldsymbol \xi^{k})]^{\top}\boldsymbol R\boldsymbol L_u\boldsymbol L_r \boldsymbol y_d,
		\end{aligned}
	\end{equation}
	where the ILC input of the first trial is chosen as $\boldsymbol u_0 = \boldsymbol 0 $. Note that
	\begin{equation}
		\begin{aligned}
			\boldsymbol \xi 
			&= \boldsymbol T_u + \boldsymbol L_u\boldsymbol T_r \\
			&= (\boldsymbol G^{\top}\boldsymbol Q\boldsymbol G + \boldsymbol R +  \boldsymbol S)^{-1} \boldsymbol R + \boldsymbol L_u \boldsymbol \rho^{-1}\boldsymbol L^{\top}_u \boldsymbol R \succ 0,
		\end{aligned}
	\end{equation}
	and $\Vert\boldsymbol \xi \Vert < 1$, then it follows that $(\boldsymbol I - \boldsymbol \xi) \succ 0$ and $(\boldsymbol I - \boldsymbol \xi^{k}) \succ 0$. Therefore, (\ref{eq:ProofTheorem1_6_Sec4}) holds and the proof is complete.
\end{pf}

When the condition in Theorem \ref{theorem_0_Sec4} is satisfied with positive definite weighting matrices, it follows from (\ref{eq:ProofTheorem0_0_Sec4}) that $v(\boldsymbol u_{k+1}) \le v(\boldsymbol u_{k+1},\boldsymbol r_{k+1})$ and $v(\boldsymbol r_{k+1}) \le v(\boldsymbol u_{k+1},\boldsymbol r_{k+1})$. This means the two-player end-to-end design can have a lower cost than any other one-player design according to the definition (\ref{eq:CharaFunc_Sec4}). Then, the grand coalition $\boldsymbol U$ is a dominating coalition compared to any other coalition. Note that both parallel and serial ILC designs are special cases of the end-to-end ILC, where only one player consists of the coalition. Therefore, the superiority of the proposed method is theoretically established in Theorem \ref{theorem_0_Sec4}.

\section{Case Study}\label{section V}
In this section, the developed end-to-end ILC is verified on a numerical model of a desktop printer with one translational degree of freedom as the one in \cite{vanMeer2022Gaussian}. Ideally, the printhead should move as fast as possible when equipped with a high-quality inkjet system as in a standard printer system. This motion problem is approximately equivalent to a step-response-shaped tracking task, which is untrackable in practice as defined in Definition \ref{def:trackable}. Research on how to plan a suitable trajectory has been conducted for better tracking performance, aiming to improve the efficiency of the printer. In this paper, the trackable trajectory will be automatically generated in the end-to-end ILC design.

The transfer function of the employed desktop printer is
\begin{equation}
	{\rm H}(s) \!=\! \frac{0.12s + 235}{9\times 10^{-5}s^4 + 1.092\times 10^{-2}s^3 + 21.385s^2},
\end{equation}
which is obtained by the system identification of a real desktop printer. This plant is stabilized in the time domain by the following feedback controller
\begin{equation}
	{\rm C}(s) = \frac{2.527\times 10^{5}s + 1.011\times 10^{7}} {s^2 + 351.9s + 6.317\times 10^{4}}.
\end{equation}
The sampling time is chosen as $10^{-3}{\rm s}$ in the discrete-time closed-loop control system with $4501$ samples in a trial.

The ILC objective is to track an untrackable impulse-response-shaped task, as the desired trajectory reference $\boldsymbol y_d$ plotted in Fig. \ref{fig:Output}. The end-to-end ILC design in Fig. \ref{fig:ControlDiagram} is employed to track $\boldsymbol y_d$. The weighting matrices in the cost function (\ref{eq:CostUnconstraint_Sec3}) are chosen to be symmetric and positive definite, i.e., $\boldsymbol Q = 10^{3}\cdot\boldsymbol I$, $\boldsymbol R = 10^{-2}\cdot\boldsymbol I$, $\boldsymbol S = 10^{-3}\cdot\boldsymbol I$, $\boldsymbol W = 10^{3}\cdot\boldsymbol I$, $\boldsymbol W_r = 10^{3}\cdot\boldsymbol I$, satisfying (\ref{Theorem0_Condition2_Sec4}) in Theorem \ref{theorem_0_Sec4}. Also, the initial ILC input is chosen as $\boldsymbol u_0 = \boldsymbol 0$.

\begin{figure}[tb]
	\centering
	\includegraphics[width=2.85in]{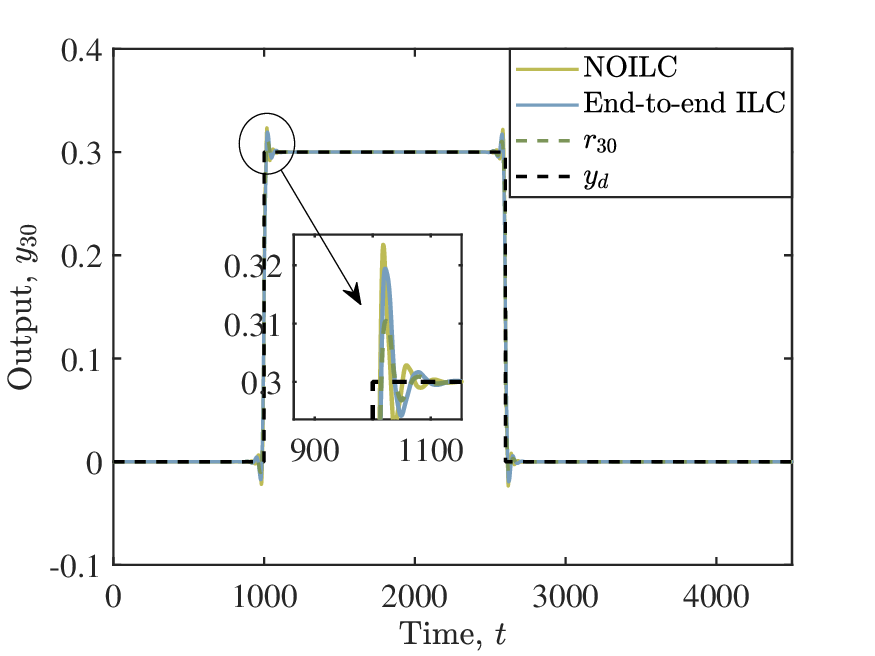}
	\caption{The $30$th tracking outputs of the end-to-end ILC compared to NOILC.}
	\label{fig:Output}
\end{figure}

\begin{figure}[tb]
	\centering
	\includegraphics[width=2.85in]{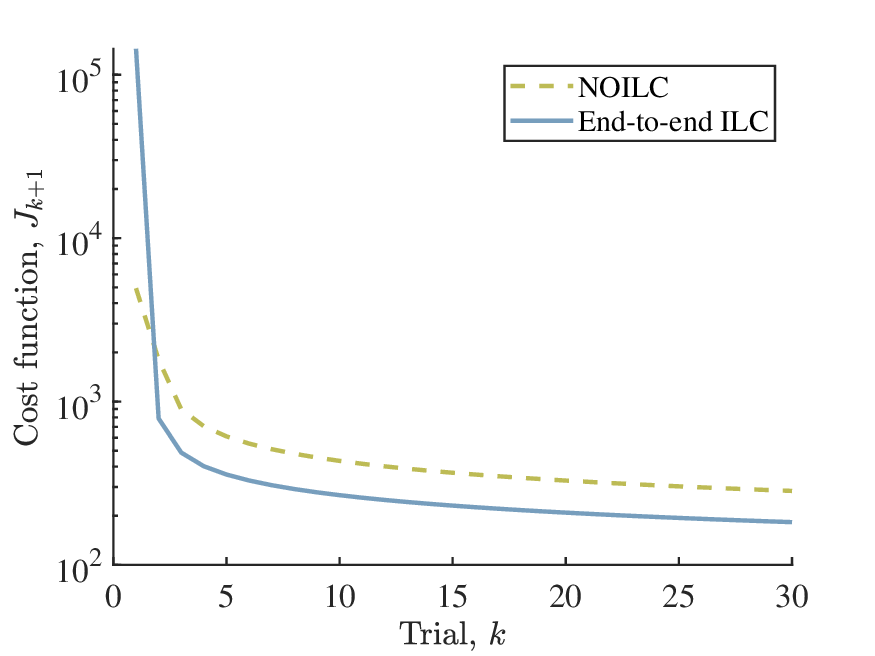}
	\caption{Cost convergence of the end-to-end ILC compared to NOILC.}
	\label{fig:Cost}
\end{figure}

Fig. \ref{fig:Output} shows the comparison results of the tracking output profiles after $30$ trials with the standard NOILC \citep{Amann1996_NOILC}, where NOILC chooses the same weighting parameters as the end-to-end ILC where $\boldsymbol r_{k+1} = \boldsymbol y_d$. The profile of the process trajectory $\boldsymbol r_{k+1}$ is also given in Fig. \ref{fig:Output}. It is shown that a trackable trajectory is learned with respect to the given cost function.

Moreover, the end-to-end ILC design can complete the task with better cost performance, which is demonstrated in the trial cost convergence in Fig. \ref{fig:Cost}. Note that the cost of the end-to-end ILC always has an additional $\boldsymbol W$-item compared to that of NOILC where $\boldsymbol r_{k+1} = \boldsymbol y_d$. Nevertheless, the lower cost of end-to-end ILC demonstrates that the cooperation between the two players of the closed-loop ILC system can better address the untrackable task.

\section{Conclusion and Future Work}\label{section VI}

In this paper, an end-to-end ILC design is developed in the closed-loop control systems for repetitive untrackable tasks. An adjustable trajectory is introduced to replace the considered untrackable trajectory reference in the closed-loop systems, which is updated together with the ILC feedforward input in the trial domain. From the cooperative game perspective, the adjustable process trajectory and the ILC input are two players of the control game for the closed-loop systems. A sufficient condition for the end-to-end ILC achieving a lower cost than NOILC is rigorously given based on the cooperative game theory. The result is verified on a simulation model of a desktop printer. For future work, the constrained case will be considered since an untrackable task usually means the violation of input restrictions in practice.

\begin{ack}
The first author would like to thank Max van Meer and Jizheng Liu for the fruitful discussions.
\end{ack}
%

\bibliography{ifacconf}             

\end{sloppypar}
\end{document}